\definecolor{DarkGreen}{rgb}{0.1,0.5,0.1}
\definecolor{DarkRed}{rgb}{0.5,0.1,0.1}
\definecolor{DarkBlue}{rgb}{0.1,0.1,0.5}
\theoremstyle{plain}
\newtheorem{theorem}{Theorem}[section]
\newtheorem{lemma}[theorem]{Lemma}
\newtheorem{claim}[theorem]{Claim}
\theoremstyle{definition}
\newtheorem{definition}[theorem]{Definition}
\newtheorem{remark}[theorem]{Remark}
\DeclareMathOperator*\supp{supp}
\newcommand\defeq{\ensuremath{\stackrel{\rm def}{=}}} 
\newcommand{\ind}[1]{^{(#1)}}
\begin{document}
\title{Improved batch code lower bounds}
\author{Ray Li\thanks{Department of Computer Science, Stanford University. rayyli@cs.stanford.edu. Research supported by NSF Grants DGE-1656518, CCF-1814629, and by Jacob Fox's Packard Fellowship.} \and Mary Wootters\thanks{Departments of Computer Science and Electrical Engineering, Stanford University.  marykw@stanford.edu.  Research partially supported by NSF Grant CCF-1844628 and by a Sloan Research Fellowship.}}
\date{\today}
\maketitle

\begin{abstract}
  Batch codes are a useful notion of locality for error correcting codes, originally introduced in the context of distributed storage and cryptography.  Many constructions of batch codes have been given, but few lower bound (limitation) results are known, leaving gaps between the best known constructions and best known lower bounds. Towards determining the optimal redundancy of batch codes, we prove a new lower bound on the redundancy of batch codes. Specifically, we study (primitive, multiset) \em linear \em batch codes that systematically encode $n$ information symbols into $N$ codeword symbols, with the requirement that any multiset of $k$ symbol requests can be obtained in disjoint ways. We show that such batch codes need $\Omega(\sqrt{Nk})$ symbols of redundancy, improving on the previous best lower bounds of $\Omega(\sqrt{N}+k)$ at all $k=n^\varepsilon$ with $\varepsilon\in(0,1)$. Our proof follows from analyzing the dimension of the order-$O(k)$ tensor of the batch code's dual code.
\end{abstract}

\section{Introduction}

In this work, we study \emph{batch codes}, a notion of locality for error correcting codes, and show stronger limitations on batch codes for almost all parameter regimes. 

Batch codes were introduced in the context of load-balancing in distributed storage and private information retrieval in cryptography \cite{IKOS04}.
Informally, a (primitive, multiset) $k$-batch code is a error correcting code $C:\Sigma^n\to \Sigma^N$,  mapping $n$ information symbols to $N$ codeword symbols, such that every multiset of $k$ information symbols can be recovered from $k$ pairwise disjoint recovering sets.
In constructing batch codes, we would like the \emph{locality} parameter $k$ to be as large as possible.
On the other hand, we would like to minimize the \emph{redundancy} $N-n$ of our code, representing the number of redundant bits in our encoding.
In this work, we prove new limitations on the quantitative tradeoff between the locality and the redundancy of batch codes.
Formally, a batch code is defined as follows.
\begin{definition}\label{def:batch}
  Let $C:\Sigma^n\to\Sigma^N$ be a code that maps $x_1,\dots,x_n$ to $c_1,\dots,c_N$.
  The code $C$ is a \emph{$k$-batch code} if, for every multiset of indices $\{i_1,\dots,i_k\}\subset[n]$, there exist $k$ mutually disjoint sets $R_1,\dots,R_k\subset[N]$ and functions $g_1,\dots,g_k$ such that for all information symbols $x_1,\dots,x_n\in\Sigma$ and codewords $(c_1,\dots,c_N)=C(x_1,\dots,x_n)\in\Sigma^N$ and for all $j\in[k]$, we have $g_j(c|_{R_j})=x_{i_j}$.
  We say $C$ is a \emph{linear} batch code if $\Sigma$ is a finite field, the functions $g_i$ are all linear, and $C$ is a linear map.
  We say $C$ has a \emph{systematic encoding} if $c_i=x_i$ for $i=1,\dots,n$. 
\end{definition}
\begin{remark}[Primitive Multiset Batch Codes]\label{rem:primitive}
We note that our definition of a batch code here is a refinement of the standard, more general notion of batch codes, introduced by \cite{IKOS04}.
In that definition of a batch code, $n$ symbols are encoded into ``buckets'' of symbols such that the total size of all buckets is $N$, and each batch of $k$ symbols can be decoded by reading at most one symbol from each bucket. A batch code is a \emph{multiset} batch code if (a) the $k$ symbols can form a multiset and (b) the $k$ symbols can be decoded by querying $k$ pairwise disjoint sets of buckets.
When each bucket can store a single symbol, the multiset batch code is said to be \emph{primitive.}  Because we only focus on primitive multiset batch codes in this work, we drop the adjectives ``primitive'' and ``multiset'' throughout and simply use ``batch code,'' as per Definition~\ref{def:batch}.
\end{remark}

The main goal in constructing batch codes is to determine the minimum redundancy $r(n,k)$ of a $k$-batch code encoding $n$ symbols. 
Many works \cite{IKOS04, DGRS14, HS15, VY16, AY17, PV19a, PPV20, HPPV20, HPPVY20}, have constructed batch codes, giving good upper bounds on $r(n,k)$. 
Figure~\ref{fig:main} and Section~\ref{ssec:related} give a summary of the known constructions.

On the other hand, few limitations are known on the optimal locality versus redundancy tradeoff $r(n,k)$.
An easy lower bound shows that a $k$-batch code has minimum Hamming distance $k$, and thus must have redundancy at least $k$.
The only nontrivial lower bound on the redundancy of (linear) batch codes is given by \cite{RV16,Woo16}, who showed that a $k$-batch code has redundancy at least $\Omega(\sqrt{n})$ when $k\ge 3$, and this is tight up to a logarithmic factor for constant $k$ \cite{VY16}.
A priori, given these two lower bounds, and the fact that there exist $n^\varepsilon$ batch codes with redundancy $n^{\delta}$ for every $\varepsilon\in(0,1)$ and for $\delta=\delta(\varepsilon)<1$ \cite{HPPV20}, it seemed possible that the optimal $\delta$ could match the best known lower bounds at $\max(1/2,\varepsilon)$. 
In our work, we refute this possibility for all $\varepsilon\in(0,1)$ (under two reasonable assumptions that also appear in prior lower bounds).

\begin{theorem}
  \label{thm:main}
  A linear $k$-batch code of length $N$ with a systematic encoding must have $\Omega(\sqrt{Nk})$ redundancy.
\end{theorem}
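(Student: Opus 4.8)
My plan is to work with the \emph{dual code} $D := C^{\perp}\subseteq \mathbb F^{N}$ over the underlying finite field $\mathbb F=\Sigma$, whose dimension is exactly the redundancy $r := N-n$ that we want to bound below, and to derive the bound from the dimension of a tensor power $D^{\otimes m}$ of order $m=\Theta(k)$. First I would dispose of the range $r=\Omega(N)$, where the conclusion $r=\Omega(\sqrt{Nk})$ is immediate since $N\ge k$; so from now on $r=o(N)$ and in particular $n=N-r\ge N/2$. The essential structural input I would extract from Definition~\ref{def:batch} is this: because the encoding is systematic and $C$ is linear, requesting the multiset consisting of a single index $i$ taken with multiplicity $k$ produces $k$ pairwise disjoint recovering sets for $c_i=x_i$, at most one of which is the trivial singleton $\{i\}$; each of the remaining $\ge k-1$ recovering sets $R$ yields a dual codeword $v = e_i-\sum_{\ell\in R}\lambda_\ell e_\ell\in D$ which (after rescaling) has $v_i=1$ and a nonempty ``tail'' $T:=\supp(v)\setminus\{i\}$, and for a fixed $i$ the tails $T_1\ind i,\dots,T_{k-1}\ind i$ of these $k-1$ codewords are pairwise disjoint. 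So for every information coordinate $i$ I record $k-1$ dual codewords $v_1\ind i,\dots,v_{k-1}\ind i\in D$ with $(v_a\ind i)_i=1$ and pairwise disjoint tails.

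Next, I would fix $m=\Theta(k)$ and work inside $D^{\otimes m}\subseteq(\mathbb F^{N})^{\otimes m}\cong \mathbb F^{N^{m}}$, whose dimension is exactly $r^{m}$; the goal is to show this dimension is large. I would construct a family of rank-one tensors in $D^{\otimes m}$ of size $(Nk)^{\Theta(m)}$, indexed by a choice of $\Theta(k)$ \emph{distinct} information coordinates together with a choice of one or more of the recorded recovery codewords $v_{a}\ind{i}$ for each, with the tensor
\[
 E \;=\; \textstyle\bigotimes_{t}\ \big(v_{a_t}\ind{i_t}\otimes v_{a_t}\ind{i_t}\big)\otimes(\cdots)
\]
arranged so that each factor-block ``advertises'' both the index $i_t$ (via a slot equal to $1$) and the recovery set $a_t$ (via a slot lying in the disjoint tail $T_{a_t}\ind{i_t}$). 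If a $(1-o(1))$-fraction of these tensors are linearly independent, then $r^{m}\ge (Nk)^{\Theta(m)}$, which on rearranging gives $r=\Omega(\sqrt{Nk})$. The order $m=\Theta(k)$ is forced: a smaller order would make the family too large for a full independence statement to hold, while $m=\Theta(k)$ still leaves enough recovery codewords per coordinate for the pigeonhole step below.

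The heart of the proof, and the step I expect to be the main obstacle, is establishing linear independence of a large subfamily of these tensors. The natural approach is a triangularity argument: order the index tuples, attach to each $E$ a ``pivot position'' $P\in[N]^{m}$ built from the coordinates $i_t$ and a fixed element of each relevant tail, verify that $E$ is nonzero at its own pivot and that the pivot assignment is essentially injective, and argue that $E'$ vanishes at $P$ whenever $E'$ comes later in the order. The obstruction is that recovery structures of different information coordinates can overlap arbitrarily — a parity symbol appearing in a recovering set for $c_i$ may also appear in a recovering set for $c_j$, and tails of codewords attached to different coordinates may intersect — so a priori $E'$ need not vanish at $P$. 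Handling this is where one spends the $\Theta(k)$ recovery codewords available at each coordinate: since only $O(k)$ coordinates are in play, a pigeonhole/greedy selection lets one choose, at each chosen coordinate, recovery codewords avoiding the ``collisions'' with the other chosen coordinates, restoring triangularity on a $(1-o(1))$-fraction of the family. The remaining pieces — the trivial case $r=\Omega(N)$, the normalization $v_i=1$ and non-emptiness of the tails, the injectivity bookkeeping for the pivot map, and the final optimization of $m$ — are routine.
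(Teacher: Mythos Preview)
Your overall plan—lower-bounding $\dim (C^{\perp})^{\otimes m}$ for $m=\Theta(k)$ by exhibiting many independent simple tensors of dual codewords—matches the paper's, but the gap is exactly where you anticipate it, and your proposed fix does not close it. You extract dual codewords by querying the batch property \emph{separately} at each coordinate $i$: the multiset $\{i,\dots,i\}$ yields $v_1^{(i)},\dots,v_{k-1}^{(i)}$ whose tails are pairwise disjoint only for that fixed $i$, while tails attached to different $i$ may overlap arbitrarily. Your pigeonhole step chooses, within a fixed tuple $(i_1,\dots,i_t)$, codewords whose tails avoid $\{i_1,\dots,i_t\}$; this controls collisions \emph{inside one tuple}, but triangularity must compare a tensor $E'$ built from $(i'_1,\dots,i'_t)$ against the pivot $P$ of a tensor $E$ built from a \emph{different} tuple. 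At a block where $i_s\neq i'_s$ you need $i_s\notin T_{a'_s}^{(i'_s)}$, yet $i_s$ is not among $E'$'s own chosen coordinates, so your selection rule for $E'$ says nothing about it. Equivalently, with only per-coordinate disjointness an index outside $\{i_1,\dots,i_t\}$ can lie in up to $t$ of the $2t$ chosen supports, so the number of ``good'' standard-basis tensors appearing in the expansion of a single $w$ is not bounded by any $C^t$, and neither a triangular nor a greedy-covering argument goes through.

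The ingredient you are missing is to issue, for each tuple $(i_1,\dots,i_t)$, a \emph{single joint} batch request on the multiset $\bigcup_{j}\{i_j,i_j,i_j\}$ (using $3t\le k$). The batch property then forces \emph{all} $3t$ recovery sets to be pairwise disjoint simultaneously, so every index of $[N]$ lies in at most one recovery set across all $j$. From this one extracts two dual codewords per $j$ whose supports meet exactly in $\{i_j\}$, and the global disjointness bounds the number of good basis tensors in the expansion of each $w_{i_1,\dots,i_t,\pi}$ by $3^t$; a greedy (not strictly triangular) selection then produces at least $\binom{n}{t}\tfrac{(2t)!}{2^t}\cdot 3^{-t}=\Omega(nt)^t$ independent vectors in $(C^{\perp})^{\otimes 2t}$, giving $r\ge\Omega(\sqrt{nt})$. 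Note also that the factor $k$ in $\sqrt{Nk}$ arises here from the $\tfrac{(2t)!}{2^t}$ arrangements of $t$ pairs in $2t$ slots, not from a choice among $\Theta(k)$ recovery codewords per coordinate as in your outline.
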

Though our lower bound assumes linearity and a systematic encoding, we point out that the only other nontrivial lower bounds for batch codes \cite{RV16,Woo16} also assume linearity, with \cite{Woo16} additionally assuming systematic encoding, and with \cite{RV16} handling non-systematic encoding but additionally assuming that the code is binary (our lower bound works for linear batch codes over any field).
We also point out that many existing constructions of batch codes are linear and have systematic encoding.

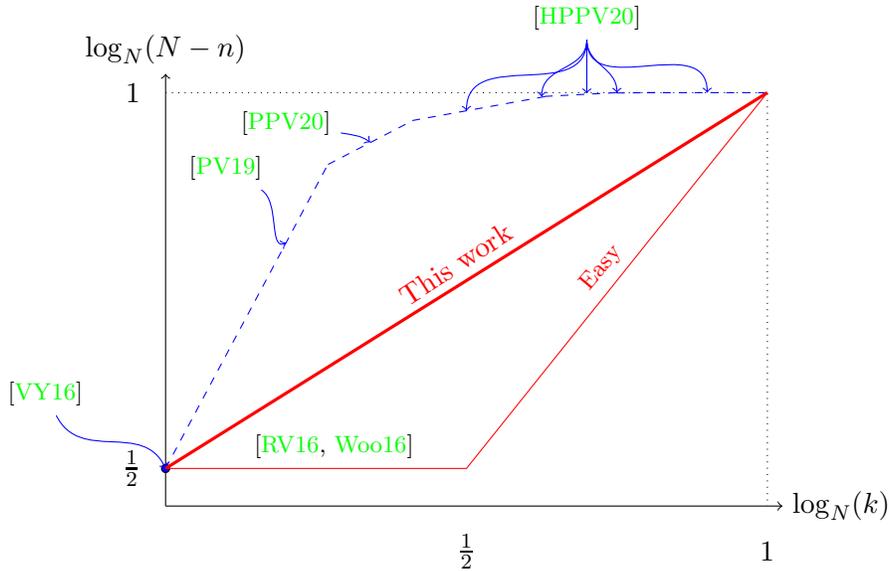
\begin{figure}
\label{fig:main}
\begin{center}
\begin{tikzpicture}[xscale=8, yscale=10]
\draw[->] (0,.45) to (1.025,.45);
\draw[->] (0,.45) to (0,1.025);
\node[anchor=west] at (1.025,.45) {$\log_N(k)$};
\node[anchor=south] at (0,1.025) {$\log_N(N - n)$};
\draw[dotted] (0,1) -- (1,1) -- (1,0.45);
\node[anchor=east] at (-.025, .5) {$\frac{1}{2}$};
\node[anchor=east] at (-.025, 1) {$1$};
\node at (1,.39) {1};
\node at (.5, .39) {$\frac{1}{2}$};
\node[draw, fill=blue, circle,scale=0.3](a) at (0, 0.5) {};
\node (vy16) at (-0.2,0.6) {\footnotesize \cite{VY16}};
\draw[blue,->] (vy16) to [out=-60,in=120] (0,0.5);
\draw[domain=0:0.27, smooth, variable=\x, blue,dashed] plot ( {\x} , {.5 + \x * 1.5});
\node (pv19) at (0.1,0.9) {\footnotesize \cite{PV19a}};
\draw[blue,->] (pv19) to [out=-30,in=200] (0.2,0.8);
\draw[domain=0.27:0.41, smooth, variable=\x, blue,dashed] plot ( {\x} , {log10(3)/log10(4) + \x * (2-log10(3)/log10(2))});
\node (ppv20) at (0.2,0.96) {\footnotesize \cite{PPV20}};
\draw[blue,->] (ppv20) to [out=350,in=150] (0.34,0.934);
\node (hppv20) at (0.7,1.1) {\footnotesize \cite{HPPV20}};
\foreach \m / \lam /\start / \end in {3/7.2361/0.41/0.64,4/15.5436/0.63/0.75,5/31.7877/0.75/0.8,6/63.9217/0.8/0.83}{
  \draw[domain=\start:\end, smooth, variable=\x, blue,dashed] plot ( {\x} , {(\m-(log10(\lam)/log10(2)))*\x + ((\m-1) * (log10(\lam)/log10(2)) / \m) - \m + 2});
  \foreach \x in {(2*\m-3)/(2*\m)} {
    \coordinate (a\m) at ( {\x} , {(\m-(log10(\lam)/log10(2)))*\x + ((\m-1) * (log10(\lam)/log10(2)) / \m) - \m + 2});
    \draw[blue,->] (hppv20) to [out=270,in=90] (a\m);
  }
}
\draw[blue,dashed] (0.8,1)--(1,1);
\draw[blue,->] (hppv20) to [out=270,in=90] (0.9,1);
\draw[red] (0,.5) -- (0.5, .5) -- (1,1);
\draw[very thick, red] (0,.5)  -- (1,1);
\node[anchor=south] at (0.28,0.5) {\footnotesize\cite{RV16,Woo16}};
\node[anchor=south,rotate=50,red] at (0.75,0.75) {\footnotesize Easy};
\node[anchor=south,red,rotate=33](us) at (.5,0.75) {This work};
\end{tikzpicture}
\end{center}
\caption{$r(n,k)$, the minimum possible redundancy of a $k$-batch code encoding $n$ information symbols. Blue dashed segments indicate code constructions. Red solid segments indicate lower bounds for linear batch codes. Our lower bound additionally assumes systematic encoding. 
}
\end{figure}

\subsection{Related work}
\label{ssec:related}

\paragraph{Prior constructions of batch codes.}

We would like to understand $r(n,k)$, the minimum possible redundancy of a $k$-batch code encoding $n$ information symbols.
The following upper bounds on $r(n,k)$ are known, with the best known ones illustrated in Figure~\ref{fig:main}, along with the known lower bounds.
\begin{itemize}
\item $r(n,k)\le O(k^{4})$, for $k=n^\varepsilon$ with $1/5<\varepsilon\le 7/32$ \cite{DGRS14}
\item $r(n,n^{1/4})\le n^{7/8}$ \cite{DGRS14}
\item $r(n,k)\le \tilde O(\sqrt{n})$ for any fixed $k$ \cite{VY16}.
\item $r(n,n^\varepsilon)\le O(n^{2/3 + 5\varepsilon/3})$ for $\varepsilon<1/2$ \cite{AY17}.
\item $r(n,n^\varepsilon)\le O(n^{5/6 + \varepsilon/3})$ for $\varepsilon<1/2$ \cite{AY17}.
\item $r(n,n^\varepsilon)\le \tilde O(n^{(3\varepsilon+1)/2})$ for $0<\varepsilon<1/3$ \cite{PV19a}.
\item $r(n,n^\varepsilon)\le \tilde O(n^{\log_4(3) + (2-\log_2(3))\varepsilon})$ for $0<\varepsilon<1/2$ \cite{PPV20}.
\item $r(n,n^{\varepsilon})\le O(n^{\delta})$ for $0\le \varepsilon< 1$, where $\delta=\delta(\varepsilon)<1$ \cite{HPPV20}
\end{itemize}

\paragraph{Other notions of locality.} Batch codes are related to several other notions of locality.
Two closely related notions are private information retrieval (PIR) codes \cite{FVY15, AY17} and codes with the disjoint repair group property (DRGP) \cite{GW13, FischerGW17, LW19,HKLW21}, which are relaxations of batch codes that only require $k$ disjoint repair groups for a single information symbol (codeword symbol in the case of DRGP).
In the case that the locality parameter $k$ is linear in the block length, PIR codes and DRGP codes are in fact equivalent to constant query locally decodable codes (LDCs) and locally correctable codes (LCCs), respectively.
Other related notions include Locally Repairable Codes \cite{GHSY12}, 
LRCs with availability \cite{RPDV14}, batch codes with availability \cite{ZS16} where the repair group sizes are also bounded, switch codes \cite{WSCB13, WKC15, CGTZ15}, which are a special case of batch codes, and combinatorial batch codes \cite{PSW09}, which are a special case of (non-primitive) multiset batch codes.
In these settings, determining the optimal locality versus redundancy tradeoff is an interesting question.
We hope our techniques could be useful for proving lower bounds for some of these other notions of locality.
For more details about some of these other notions, we refer the reader to the survey \cite{Ska16}.

\subsection{Preliminaries}
\paragraph{Basic notation.}
We use $[N]$ to denote the set $\{1, \ldots, N\}$.  
For a vector $c\in\mathbb{F}^N$, let $\supp(c)\defeq \{i\in[N]:c_i\neq 0\}$. 
For two vector spaces $V,W$ over $\mathbb{F}$, we use $V\le W$ to denote $V$ is a subspace of $W$.

By abuse of notation, we identify a code $C:\mathbb{F}^n\to\mathbb{F}^N$ with a systematic encoding by its image $C\le\mathbb{F}^N$, which is a subspace of $\mathbb{F}^N$.
In the rest of this paper, all codes have systematic encoding and are represented this way.

\paragraph{Dual codes.}
The \emph{dual code} $C^\perp\le \mathbb{F}^N$ of a linear code $C\le\mathbb{F}^N$ is the subspace of all codewords orthogonal to every codeword of $C$.
One can easily check that $\dim C^\perp = N - \dim C$, so in particular the dimension of $C^\perp$ is equal to the redundancy of $C$.
We call elements of $C^\perp$ \emph{dual codewords (of $C$)}.
We use the following standard fact of dual codes, and include a proof for completeness.
\begin{lemma}
Suppose $C\le\mathbb{F}^N$ is a linear code with a systematic encoding, and there exists an index $i\in[N]$, a set $R\subset[N]\setminus\{i\}$, and a linear function $g$, such that $g(c|_R) = c_i$ for all codewords $c\in C$.
Then there exists a nonzero dual codeword $c^\perp$ such that $\supp(c^\perp) \subset R\cup \{i\}$ and $i\in \supp(c^\perp)$.
\label{lem:dual}
\end{lemma}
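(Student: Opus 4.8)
The plan is to use the hypothesis $g(c|_R) = c_i$ for all $c \in C$ to manufacture a linear functional on $\mathbb{F}^N$ that vanishes on $C$, and then to argue this functional is nonzero with the required support. First I would use linearity of $g$ to write $g(c|_R) = \sum_{j \in R} a_j c_j$ for some coefficients $a_j \in \mathbb{F}$. Then the defining relation becomes $\sum_{j \in R} a_j c_j - c_i = 0$ for every codeword $c \in C$. Define the vector $c^\perp \in \mathbb{F}^N$ by setting $c^\perp_j = a_j$ for $j \in R$, $c^\perp_i = -1$, and $c^\perp_\ell = 0$ for all other $\ell$; this is well-defined precisely because $i \notin R$, so the two prescriptions do not conflict. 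By construction $\langle c^\perp, c\rangle = 0$ for all $c \in C$, so $c^\perp \in C^\perp$.

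It remains to check the support conditions. Since $c^\perp$ is zero outside $R \cup \{i\}$ by construction, we have $\supp(c^\perp) \subset R \cup \{i\}$. The key point is that $c^\perp$ is nonzero and in fact $i \in \supp(c^\perp)$: this is where the \emph{systematic encoding} hypothesis is used. Because $C$ has a systematic encoding, there is a codeword $c \in C$ with $c_i = 1$ (take the encoding of the $i$-th standard basis vector, recalling $i \le n$; more carefully, one notes the relation is required to hold for \emph{all} codewords, and systematicity guarantees $c_i$ is not identically zero on $C$). Since $c^\perp_i = -1 \neq 0$, we conclude $i \in \supp(c^\perp)$, so $c^\perp$ is nonzero.

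I do not anticipate a substantive obstacle here; the statement is essentially a bookkeeping exercise once $g$ is expanded into its coefficients. The one subtlety worth stating carefully is why $c^\perp_i \neq 0$ forces $c^\perp$ to be a genuine witness rather than, say, a relation that could be rewritten with $i$ removed from the support --- but since we have pinned $c^\perp_i = -1$ explicitly, and $i \notin R$ ensures no cancellation at coordinate $i$, this is immediate. (Implicitly we use $i \le n$ so that $x_i$ is an information symbol and systematicity applies at coordinate $i$; this holds because $R \cup \{i\} \subseteq [N]$ and the recovery relation is being applied to an information symbol $x_i$ in the intended use.)
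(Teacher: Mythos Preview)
Your proposal is correct and follows essentially the same construction as the paper: expand $g$ in coefficients, assemble $c^\perp$ with a $-1$ at coordinate $i$, and read off the support conditions. One small point: your detour through systematic encoding is unnecessary---the conclusion $i \in \supp(c^\perp)$ follows immediately from $c^\perp_i = -1 \neq 0$, which holds in any field and does not depend on $i \le n$ or on $c_i$ being non-identically zero on $C$; the paper's proof likewise just says ``by construction'' at this step.
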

\begin{proof}
  Suppose $g$ is the function $g(c|_R) = \sum_{j\in R}^{} \alpha_j c_j$.
  Then $-c_i + \sum_{j\in R}^{} \alpha_j c_j = 0$ for all codewords $c$.
  Thus, the vector $c^\perp$ with $c^\perp_j=-1$ if $j=i$, $c^\perp_j=\alpha_j$ if $j\in R$, and $c^\perp_j=0$ otherwise, is a dual codeword. Vector $c^\perp$ also satisfies $\supp(c^\perp) \subset R\cup \{i\}$ and $i\in \supp(c^\perp)$ by construction, as desired.
\end{proof}

\paragraph{Tensor products.}
We let $e_i$ denote the standard basis vector in $\mathbb{F}^N$, so that $(e_i)_i = 1$ and $(e_i)_j = 0$ for all $j \neq i$.  
For $v^{(1)}, \ldots, v^{(s)} \in \mathbb{F}^N$, we define the tensor product $v^{(1)} \otimes \cdots \otimes v^{(s)} \in \mathbb{F}^{N^s}$ to be the vector indexed by tuples $(i_1, \ldots, i_s) \in [N]^s$ with $$(v^{(1)} \otimes \cdots \otimes v^{(s)})_{(i_1, \ldots, i_s)} = \prod_{j=1}^s v^{(j)}_{i_j}.$$
A tensor of this form is called a \emph{simple tensor}; more generally a \emph{tensor} is any linear combination of simple tensors. We note that the set of simple tensors
\[ \left\{\bigotimes_{j=1}^s e_{i_j} \,:\, (i_1, \ldots, i_s) \in [N]^s \right\} \]
forms a basis for $\mathbb{F}^{N^s}$.  We refer to this as the standard tensor basis for $\mathbb{F}^{N^s}$.  
Accordingly, every vector in $\mathbb{F}^{N^s}$ can be written as a linear combination of the standard tensor basis, which we call the \emph{standard basis representation}.
Because these tensors form a basis, we have the following useful fact.

\begin{lemma}
  Let $(e\ind{1},w\ind{1}),\dots,(e\ind{D},w\ind{D})$ be pairs of tensors in $\mathbb{F}^{N^s}$ such that, for all $i=1,\dots,D$, tensor $e\ind{i}$ is in the standard tensor basis and the standard basis representation of $w\ind{i}$ contains $e\ind{i}$ and none of $e\ind{i+1},\dots,e\ind{D}$.
  Then $w\ind{1},\dots,w\ind{D}$ are linearly independent.
\label{lem:tensor}
\end{lemma}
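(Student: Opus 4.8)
The plan is to run a standard ``leading term'' argument, exactly analogous to the proof that the rows of a triangular matrix are linearly independent. Suppose, for contradiction, that $w\ind{1},\dots,w\ind{D}$ are linearly dependent, so that $\sum_{i=1}^{D}\lambda_i w\ind{i}=0$ for scalars $\lambda_1,\dots,\lambda_D\in\mathbb{F}$ that are not all zero. Let $m$ be the largest index with $\lambda_m\neq 0$.

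The key step is to read off the coefficient of the fixed standard basis tensor $e\ind{m}$ in the standard basis representation of $\sum_{i=1}^{D}\lambda_i w\ind{i}$. Since the simple tensors $\{\bigotimes_{j} e_{i_j}\}$ form a basis of $\mathbb{F}^{N^s}$ (as recalled just above the statement), this representation is unique, so this coefficient is well-defined and, because the sum is the zero tensor, it must be $0$. I would compute it by splitting the sum into three ranges of $i$: for $i>m$ the term contributes nothing since $\lambda_i=0$; for $i<m$ the index $m$ lies in $\{i+1,\dots,D\}$, so by hypothesis the standard basis representation of $w\ind{i}$ does not contain $e\ind{m}$, and hence the coefficient of $e\ind{m}$ coming from this term is $0$; and for $i=m$, the hypothesis guarantees that the standard basis representation of $w\ind{m}$ contains $e\ind{m}$, i.e.\ its coefficient of $e\ind{m}$ is some $c\in\mathbb{F}\setminus\{0\}$. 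Combining the three cases, the coefficient of $e\ind{m}$ in $\sum_{i=1}^{D}\lambda_i w\ind{i}$ equals $\lambda_m c\neq 0$, contradicting the assumption that this sum is zero.

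I do not expect a genuine obstacle here; the statement is essentially a reformulation of the triangularity principle. The only points requiring care are bookkeeping the three index ranges correctly (in particular that $i<m$ forces $m\in\{i+1,\dots,D\}$, which is exactly the condition the hypothesis rules against) and invoking uniqueness of the standard basis representation. Note also that the argument never needs the tensors $e\ind{1},\dots,e\ind{D}$ to be distinct — distinctness in fact follows a posteriori, since $e\ind{i}=e\ind{j}$ with $i<j$ would violate the hypothesis that $w\ind{i}$ does not contain $e\ind{j}$.
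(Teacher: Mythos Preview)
Your argument is correct and essentially identical to the paper's own proof: both suppose a nontrivial dependence, pick the largest index with nonzero coefficient, and obtain a contradiction by reading off the coefficient of the corresponding standard basis tensor. The only cosmetic difference is that the paper writes $w\ind{m}=\beta e\ind{m}+w'$ explicitly before extracting the coefficient, whereas you split the sum into three index ranges; the logic is the same.
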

\begin{proof}
  Suppose for contradiction $\sum_{i=1}^{D} \alpha_i w\ind{i} = 0$ is a nonzero linear combination of $w\ind{1},\dots,w\ind{D}$.
  Let $j$ be the largest index such that $\alpha_{j}\neq 0$.
  By definition of $w\ind{j}$, we may write $w\ind{j} = \beta e\ind{j} + w'$ for some $\beta\neq 0$ and some $w'\in \mathbb{F}^{N^s}$ not containing $e\ind{j}$ in its standard basis representation.
  Then the coefficient of $e\ind{j}$ in the standard basis representation of $\sum_{i=1}^{D} \alpha_i w\ind{i} = (\sum_{i < j}^{} \alpha_i w\ind{i} + \alpha_j w') + \alpha_j\beta e\ind{j}$ is exactly $\alpha_j\beta$, since $w\ind{1},\dots,w\ind{j-1}$ and $w'$ do not contain $e\ind{j}$ in their standard basis representation, so $\alpha_j\beta=0$.
  This contradicts that $\alpha_j,\beta\neq 0$.
\end{proof}

For a subspace $V\le \mathbb{F}^N$ and integer $s$, we let $V^{\otimes s}$ denote the subspace of $\mathbb{F}^{N^s}$ spanned by simple tensors $v\ind{1}\otimes \cdots\otimes v\ind{s}$ for $v\ind{1},\dots,v\ind{s}\in V$.
A standard fact says that $\dim V^{\otimes s} = (\dim V)^s$, for all subspaces $V$ and positive integers $s$.

\section{Proof of the main theorem}
\subsection{Sketch of the proof}

Though our proof is short, we provide a brief summary to highlight the main ideas.
For intuition in this sketch, we strengthen the definition of batch codes to require that the repair groups $R_1,\dots,R_k$ are not only pairwise disjoint but also that they are each disjoint from $\{i_1,\dots,i_k\}$. 
Removing this assumption is not difficult.

Suppose $k=2t$ for some integer $t$.
Let $V = C^\perp$.
Call a simple tensor $e_{i'_1}\otimes \cdots\otimes  e_{i'_{2t}}\in\mathbb{F}^{N^{2t}}$ \emph{good} if the multiset $\{i'_1,\dots,i'_{2t}\}\subset[n]$ contains exactly $t$ distinct elements, each appearing exactly twice. 
Given such a multiset $\{i'_1, i'_2, \ldots, i'_{2t}\}$ for a good tensor, the definition of a $2t$-batch code together with Lemma~\ref{lem:dual} guarantee the existence of $2t$ dual codewords, one for each $i'_j$, so that the support of the corresponding codeword is $R_j \cup \{i'_j\}$, and where the $R_j$ are all disjoint from each other and $\{i_1',\dots,i_{2t}'\}$.  
By tensoring these dual codewords, we obtain a tensor $w_{i'_1, \ldots, i'_{2t}} \in V^{\otimes 2t}$ whose standard basis representation contains the good simple tensor $\bigotimes_{j=1}^{2t} e_{i'_j}$ and, crucially, no other good simple tensors.

Because these tensors $w_{i'_1, \ldots, i'_{2t}}$ have only one good simple tensor each, unique to them, they must be linearly independent.  Thus, the dimension of $V^{\otimes 2t}$ is lower bounded by the number of tensors $w_{i'_1,\dots,i'_{2t}}$, which is at least the number of good simple tensors, which we can count to be $\binom{n}{t}\binom{2t}{2,2,\dots,2} \ge \Omega(nt)^t$.
As $\dim(V^{\otimes 2t}) = (\dim V)^{2t}$, we conclude that $\dim V \ge (\Omega(nt)^t)^{1/2t} = \Omega(\sqrt{nt}) = \Omega(\sqrt{Nk})$, as desired.

\subsection{Full proof}
\begin{proof}[Proof of Theorem~\ref{thm:main}]
  Let $t=\lfloor k/3\rfloor$.
  Define vector spaces $V\defeq C^\perp$ and $W \defeq V^{\otimes 2t}$. 
  In the standard tensor basis of $\mathbb{F}^{N^{2t}}$, call a simple tensor $e_{i'_1}\otimes \cdots\otimes  e_{i'_{2t}}$ \emph{good} if the multiset $\{i'_1,\dots,i'_{2t}\}$ contains exactly $t$ distinct elements, each appearing exactly twice.

  For every $t$-tuple $(i_1,\dots,i_{t})\in[n]^t$ with $i_1>\cdots>i_t$, consider the $t$-multiset of symbols $\cup_{j=1}^{t} \{x_{i_j},x_{i_j},x_{i_j}\}$.
  Since $C$ is a $3t$-batch code, these have recovery sets $R_{j,1},R_{j,2},R_{j,3}$ for $j=1,\dots,t$ that are pairwise disjoint.
  Furthermore, for any $j$, at least two of the $R_{j,1},R_{j,2},R_{j,3}$ do not contain $\{i_j\}$. 
  Thus, for each $j$, by Lemma~\ref{lem:dual}, there exist two dual codewords $c\ind{j,1},c\ind{j,2}$ such that $\supp(c\ind{j,1})\cap \supp(c\ind{j,2}) = \{i_j\}$.
  Define $\mathcal{D}_{i_1,\dots,i_t}\defeq (c\ind{j,\ell})_{1\le j\le t, \ell=1,2}$.

  Call a bijection $\pi:[2t]\to [t]\times [2]$ a \emph{good map}.
  For a good map $\pi$, let $\pi_1:[2t]\to [t]$ denote the first coordinate of $\pi$ and let $\pi_2:[2t]\to [2]$ denote the second coordinate of $\pi$.
  For each good map $\pi$ and each $i_1,\dots,i_t$, let $e_{i_1,\dots,i_t,\pi}$ denote the simple tensor
  \begin{align}
    e_{i_1,\dots,i_t,\pi}\defeq \bigotimes_{j=1}^{2t}e_{i_{\pi_1(j)}}.
  \end{align}
  Clearly $e_{i_1,\dots.i_t,\pi}$ is a good simple tensor, since $\pi$ is a bijection and thus each $\pi_1(j)$ appears twice.

  For each good map $\pi$ and each $i_1,\dots,i_t$, if $\mathcal{D}_{i_1,\dots,i_t}=(c\ind{j,\ell})_{1\le j\le t, \ell=1,2}$, then let $w_{i_1,\dots,i_t,\pi}$ denote the tensor
  \begin{align}
    \label{eq:w}
    w_{i_1,\dots,i_t,\pi}\defeq \bigotimes_{j=1}^{2t} c\ind{\pi_1(j),\pi_2(j)}.
  \end{align}
  As $c\ind{\pi_1(j),\pi_2(j)}$ are all in $V$, each $w_{i_1,\dots,i_t,\pi}$ is in $W=V^{\otimes 2t}$.
  \begin{claim}
    \label{cl:2}
    Each $w_{i_1,\dots,i_t,\pi}$, written in the standard basis of $\mathbb{F}^{N^{2t}}$ has at most $3^t$ good simple tensors. 
  \end{claim}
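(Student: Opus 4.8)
The plan is to expand the tensor $w_{i_1,\dots,i_t,\pi}$ in the standard tensor basis and bound how many basis tensors appearing in the expansion can possibly be good. Recall $w_{i_1,\dots,i_t,\pi} = \bigotimes_{j=1}^{2t} c\ind{\pi_1(j),\pi_2(j)}$, where each $c\ind{a,b}$ is a dual codeword. Writing each factor $c\ind{a,b}$ in the standard basis of $\mathbb{F}^N$ as a linear combination of the $e_m$ with $m\in\supp(c\ind{a,b})$, and distributing the tensor product over these sums, we see that every standard basis tensor appearing in $w_{i_1,\dots,i_t,\pi}$ has the form $\bigotimes_{j=1}^{2t} e_{m_j}$ with $m_j \in \supp(c\ind{\pi_1(j),\pi_2(j)})$ for each $j$. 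So I would count, among all such choices of $(m_1,\dots,m_{2t})$, how many yield a \emph{good} tensor.

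The key structural input is the disjointness property built into $\mathcal{D}_{i_1,\dots,i_t}$. For each fixed value $a\in[t]$, there are exactly two coordinates $j,j'\in[2t]$ with $\pi_1(j)=\pi_1(j')=a$ (since $\pi$ is a bijection onto $[t]\times[2]$), and for those we have $\{\pi_2(j),\pi_2(j')\}=\{1,2\}$, so the two relevant supports are $\supp(c\ind{a,1})$ and $\supp(c\ind{a,2})$, which intersect in exactly $\{i_a\}$ by construction. For a good tensor, the multiset $\{m_1,\dots,m_{2t}\}$ must consist of $t$ distinct values each occurring exactly twice; in particular $m_j = m_{j'}$ for the paired coordinates, because otherwise — I would argue — the total number of distinct values is too large or some value other than through a pair repeats, violating goodness. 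More carefully: if for some pair $(j,j')$ we had $m_j \neq m_{j'}$, then since each support $\supp(c\ind{a,b})$ can contribute, the count of distinct indices among $m_1,\dots,m_{2t}$ would strictly exceed $t$ (each of the $t$ pairs contributes at least one distinct index, and a mismatched pair contributes two), so the tensor cannot be good. Hence for a good tensor we must have $m_j = m_{j'} =: \mu_a$ for each $a\in[t]$, and $\mu_a \in \supp(c\ind{a,1})\cap\supp(c\ind{a,2}) = \{i_a\}$, forcing $\mu_a = i_a$. So in fact the \emph{only} good simple tensor appearing in $w_{i_1,\dots,i_t,\pi}$ is $e_{i_1,\dots,i_t,\pi}$ itself — giving a bound of $1 \le 3^t$.

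The main obstacle is the previous paragraph's argument that a mismatched pair forces non-goodness; I should double-check whether it is conceivable that two \emph{different} pairs ``cancel out'' their mismatches to still land on $t$ distinct values each with multiplicity two. I would rule this out by a clean counting/parity argument: consider the map sending each pair-index $a\in[t]$ to the set $\{m_j : \pi_1(j)=a\}$, a set of size $1$ or $2$; goodness requires every value in $\{m_1,\dots,m_{2t}\}$ to have even multiplicity, and a value appearing in a size-$2$ pair contributes odd multiplicity from that pair unless it also appears (oddly often) elsewhere, which chains into a contradiction with the disjointness of the remaining supports. Since the statement only asks for the weaker bound $3^t$, I could alternatively take the lazy route: each of the $t$ pairs of supports $\supp(c\ind{a,1}),\supp(c\ind{a,2})$ offers at most three ``types'' of contribution to a good tensor — both coordinates equal to $i_a$, or both coordinates equal to some common element of one of the two supports realized in a way consistent with goodness — and multiplying the at-most-$3$ choices over the $t$ pairs yields at most $3^t$. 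Either way, $3^t$ is a comfortable overcount and the heart of the argument is the $\supp(c\ind{a,1})\cap\supp(c\ind{a,2})=\{i_a\}$ disjointness, which I would make sure to invoke explicitly.
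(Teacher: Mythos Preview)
Your proposal has a genuine gap: the stronger claim that the \emph{only} good simple tensor in $w_{i_1,\dots,i_t,\pi}$ is $e_{i_1,\dots,i_t,\pi}$ is false in general, and both of your suggested arguments for it break down for the same reason. You are implicitly assuming that the supports $\supp(c\ind{a,1}),\supp(c\ind{a,2})$ for different values of $a$ are disjoint from one another. They are not: while the recovery sets $R_{j,\ell}$ are pairwise disjoint, nothing prevents the index $i_j$ from lying in some $R_{j',\ell'}$ with $j'\neq j$, and hence in $\supp(c\ind{j',\ell'})$. Concretely, take $t=2$ with $i_1\in R_{2,1}$ and $i_2\in R_{1,1}$, and $\pi$ sending $1,2,3,4$ to $(1,1),(1,2),(2,1),(2,2)$. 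Then $e_{i_2}\otimes e_{i_1}\otimes e_{i_1}\otimes e_{i_2}$ is a second good simple tensor appearing in $w_{i_1,i_2,\pi}$: both $\pi_1$-pairs are ``mismatched,'' yet the multiset $\{i_1,i_1,i_2,i_2\}$ is good. This simultaneously refutes your counting claim (``each of the $t$ pairs contributes at least one distinct index'' is only true if different pairs cannot share an index, which they can), your parity argument (a value can contribute oddly to two different mismatched pairs and hence evenly overall), and your ``lazy route'' (which still presumes both coordinates of a $\pi_1$-pair are equal).

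The paper's argument avoids this by counting per \emph{value} rather than per pair of positions. Since $\supp(c\ind{j,\ell})\subseteq\{i_j\}\cup R_{j,\ell}$ and the $R_{j,\ell}$ are pairwise disjoint, each $i_j$ lies in at most three of the $2t$ supports (the two supports $\supp(c\ind{j,1}),\supp(c\ind{j,2})$, plus at most one other $\supp(c\ind{j',\ell'})$ via $i_j\in R_{j',\ell'}$), and any $m\notin\{i_1,\dots,i_t\}$ lies in at most one support. A good tensor needs every entry repeated twice, so only the $i_j$'s can occur; each $i_j$ then has at most three eligible positions among the $2t$ slots, giving at most $\binom{3}{2}=3$ choices per $j$ and hence at most $3^t$ good tensors. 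You should rework your argument along these lines; the point is that the ``three'' in $3^t$ is $\binom{3}{2}$ positions-per-value, not three options-per-pair-of-positions.
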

  \begin{proof}
    Note that $\supp(c\ind{j,\ell}) \subseteq \{i_j\}\cup R_{j,\ell}$, and furthermore that $R_{j,\ell}$ are pairwise disjoint. Thus, each of $i_1,\dots,i_t$ appears in at most three supports $\supp(c\ind{j,\ell})$; that is, $i_j$ can appear in $\supp(c\ind{j,1})$, $\supp(c\ind{j,2})$, and then in $\supp(c\ind{j',\ell})$ for at most one other $(j', \ell)$, so that $i_j \in R_{j', \ell}$.  Further, any element $i \in [N] \setminus \{i_1, \ldots, i_t\}$ appears in at most one set $R_{j, \ell}$ and hence in at most one support $\supp(c\ind{j,\ell})$. 
    Thus, any simple tensor $e_{i'_1}\otimes\cdots\otimes e_{i'_{2t}}$ appearing in the standard basis representation of $w_{i_1,\dots,i_t,\pi}$ has at most three of any of $e_{i_1},\dots,e_{i_t}$ and at most one any other $e_i$ in the product.
    Thus, in any \emph{good} simple tensor $e_{i'_1}\otimes\cdots\otimes e_{i'_{2t}}$ in the standard basis representation of $w_{i_1,\dots,i_t,\pi}$, there must be two of each of $e_{i_1},\dots,e_{i_t}$ in the product.
    For each $e_{i_j}$, there are at most three of the $2t$ positions where it can appear in the standard basis representation: in the (two) positions $j'$ such that $\pi(j')_1=j$, or the (at most one) position $j'$ such that $R_{\pi(j')_1,\pi(j')_2}$ contains $\{i_j\}$.
    Thus, there are at most $\binom{3}{2}=3$ choices for the positions of $e_{i_j}$ for each $j$, so there are at most $3^t$ good tensors in $w_{i_1,\dots,i_t}$.
  \end{proof}
  \begin{claim}
    \label{cl:3}
    Each $w_{i_1,\dots,i_t,\pi}$, written in the standard basis of $\mathbb{F}^{N^{2t}}$ contains the simple tensor $e_{i_1,\dots,i_t,\pi}$.
  \end{claim}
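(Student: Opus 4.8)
The plan is to compute directly the coefficient of $e_{i_1,\dots,i_t,\pi}$ in the standard basis representation of $w_{i_1,\dots,i_t,\pi}$ and check that it is nonzero. Recall from \eqref{eq:w} that $w_{i_1,\dots,i_t,\pi}=\bigotimes_{j=1}^{2t}c\ind{\pi_1(j),\pi_2(j)}$, whereas $e_{i_1,\dots,i_t,\pi}=\bigotimes_{j=1}^{2t}e_{i_{\pi_1(j)}}$. By the definition of the tensor product in the standard basis, for any vectors $v\ind{1},\dots,v\ind{2t}\in\mathbb{F}^N$ we have $v\ind{1}\otimes\cdots\otimes v\ind{2t}=\sum_{(a_1,\dots,a_{2t})\in[N]^{2t}}\big(\prod_{j=1}^{2t}\big(v\ind{j}\big)_{a_j}\big)\,e_{a_1}\otimes\cdots\otimes e_{a_{2t}}$, and since the simple tensors $e_{a_1}\otimes\cdots\otimes e_{a_{2t}}$ form a basis of $\mathbb{F}^{N^{2t}}$, this is exactly the standard basis representation, so there is no cancellation across terms. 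Hence the coefficient of $e_{i_1,\dots,i_t,\pi}$ in $w_{i_1,\dots,i_t,\pi}$ is precisely $\prod_{j=1}^{2t}\big(c\ind{\pi_1(j),\pi_2(j)}\big)_{i_{\pi_1(j)}}$.

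Next I would argue that every factor in this product is nonzero. Fix $j\in[2t]$ and write $m=\pi_1(j)$ and $\ell=\pi_2(j)$. The dual codeword $c\ind{m,\ell}$ was obtained by applying Lemma~\ref{lem:dual} to the recovery set $R_{m,\ell}$ for the information symbol $x_{i_m}$, so that lemma guarantees $i_m\in\supp(c\ind{m,\ell})$, i.e.\ $\big(c\ind{m,\ell}\big)_{i_m}\neq 0$. Since $\mathbb{F}$ is a field, the product of the $2t$ nonzero scalars $\big(c\ind{\pi_1(j),\pi_2(j)}\big)_{i_{\pi_1(j)}}$ is nonzero, so $e_{i_1,\dots,i_t,\pi}$ appears with nonzero coefficient in the standard basis representation of $w_{i_1,\dots,i_t,\pi}$, which is exactly the claim.

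I do not expect a genuine obstacle here: the statement is essentially bookkeeping once the definitions are unwound. The one point worth stating carefully is that the termwise product formula really does give the \emph{full} coefficient of a fixed standard tensor in a simple tensor (no cross-term cancellation), which holds because distinct standard tensor basis elements are linearly independent; combined with the fact that each $c\ind{m,\ell}$ has $i_m$ in its support (by Lemma~\ref{lem:dual}), this finishes the proof.
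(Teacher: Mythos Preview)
Your proposal is correct and follows essentially the same approach as the paper: both compute the coefficient of $e_{i_1,\dots,i_t,\pi}$ in $w_{i_1,\dots,i_t,\pi}$ as $\prod_{j=1}^{2t}\big(c\ind{\pi_1(j),\pi_2(j)}\big)_{i_{\pi_1(j)}}$ and observe each factor is nonzero since $i_{\pi_1(j)}\in\supp(c\ind{\pi_1(j),\pi_2(j)})$. Your version is just more explicit about the tensor-product expansion and the appeal to Lemma~\ref{lem:dual}.
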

  \begin{proof}
    For each $j=1,\dots,2t$, the dual codeword $c\ind{\pi_1(j),\pi_2(j)}$ has a nonzero coefficient in coordinate $i_{\pi_1(j)}$, so $w_{i_1,\dots,i_t,\pi}=\bigotimes_{j=1}^{2t}c^{(\pi_1(j),\pi_2(j))}$ has a nonzero coefficient in coordinate $(i_{\pi_1(1)},i_{\pi_1(2)},\dots,i_{\pi_1(2t)})$. 
    Hence, the  standard basis representation of $w_{i_1,\dots,i_t,\pi}$ contains the simple tensor $\bigotimes_{j=1}^{2t} e_{i_{\pi_1(j)}} = e_{i_1,\dots,i_t,\pi}$.
  \end{proof}

  Now consider the set $E$ of all good simple tensors $e_{i_1,\dots,i_t,\pi}$ with $i_1>\cdots>i_t$ and $\pi$ a good map.
  Note that there are exactly $\binom{n}{t}\binom{2t}{2,2,\dots,2}$ elements of $E$, as we may first choose $i_1>\cdots>i_t$ and then choose $\pi_1$.  (Notice that the definition of $e_{i_1, \ldots, i_t, \pi}$ does not depend on $\pi_2$).
  Choose a sequence $(e\ind{1},w\ind{1}),(e\ind{2},w\ind{2}),\dots,\in E\times W$ as follows: given $(e\ind{1},w\ind{1}),\dots,(e\ind{r},w\ind{r})$, choose $e\ind{r+1}=e_{i_1,\dots,i_t,\pi}$ to be a good simple tensor not in the standard basis representation of any of $w\ind{1},\dots,w\ind{r}$, and let $w\ind{r+1}\defeq w_{i_1,\dots,i_t,\pi}$, which contains $e\ind{r+1}$ by Claim~\ref{cl:3}.
  By Claim~\ref{cl:2}, since each $w\ind{r}$ has at most $3^t$ good simple tensors in the standard basis representation, this process can be continued for at least $D\defeq |E|/3^t$ steps.
  Furthermore, we guarantee that $w\ind{r}$ contains $e\ind{r}$ and none of $e\ind{r+1},\dots,e\ind{D}$ for any $r=1,\dots,D$.
  Thus, by Lemma~\ref{lem:tensor}, the tensors $w\ind{1},\dots,w\ind{D}$ are linearly independent.
  Hence, we have
  \begin{align}
    (\dim V)^{2t} 
    =   \dim W
    \ge D
    =   |E|/3^t 
    =   \binom{n}{t}\cdot\frac{(2t)!}{2^t}\cdot\frac{1}{3^t} 
    \ge \left( \frac{n}{t} \right)^t\cdot\frac{(2t)^{2t}/e^{2t}}{2^t\cdot 3^t} 
    =  \frac{n^tt^t }{(3e^2/2)^t}.
  \label{}
  \end{align}
  Hence, the redundancy is at least
  \begin{align}
    \dim V \ge \Omega(\sqrt{nt}) \ge \Omega(\sqrt{Nk}).
  \end{align}
  Finally, we note that if $n$ is not $\Omega(N)$, then we are automatically done since the redundancy is at least $N/2$.  This completes the proof.
\end{proof}

\bibliographystyle{alpha}
\bibliography{bib}
\end{document}